\newcommand{\eq}{\triangleq}
\newcommand{\hfs}{\hfill\ensuremath{\square}} 
\DeclareMathOperator*{\argmin}{arg\,min}
\DeclareMathOperator*{\supp}{supp}
\newcommand{\field}[1]{\mathbb{#1}}
\newcommand{\R}{\field{R}}
\newcommand{\N}{\field{N}}
\newcommand{\F}{{\mathcal{F}}}
\newcommand{\E}{{\mathcal{E}}}
\newcommand{\T}{\top}
\newcommand{\rank}{\mathrm{rank}}
\newcommand{\diag}{\mathrm{diag}}
\newcommand{\vc}[1]{{\boldsymbol{#1}}}
\newcommand{\K}{K}
\newcommand{\Bs}{B}
\newcommand{\vx}{\vc{x}}
\newcommand{\vu}{\vc{u}}
\newcommand{\vz}{\vc{0}}
\newcommand{\vv}{\vc{v}}
\newcommand{\eps}{\vc{\varepsilon}}
\newcommand{\Sr}{{\mathcal{S}}}
\newcommand{\vr}{\vc{r}}
\newcommand{\vg}{\vc{g}}
\newcommand{\OMP}{\text{OMP}}
\newcommand{\elll}{\ell^1/\ell^2}
\newcommand{\Pzero}{{\bf P}_0}
\newcommand{\Qone}{{\bf Q}_1}
\newcommand{\Qtwo}{{\bf Q}_2}
\newtheorem{thm}{Theorem}[section]
\newtheorem{ass}[thm]{Assumption}
\newtheorem{lem}[thm]{Lemma}
\newtheorem{defn}[thm]{Definition}
\newtheorem{rem}[thm]{Remark}
\title{Packetized Predictive Control for Rate-Limited Networks via
Sparse Representation%
\thanks{This research was supported in part under 
the MEXT (Japan) Grant-in-Aid for Young Scientists (B) No.~22760317,
and also
 Australian Research Council's
  Discovery Projects funding scheme (project number DP0988601).} 
}
 \author{Masaaki Nagahara%
  \thanks{Masaaki Nagahara is with the Graduate School of Informatics, Kyoto
    University, Kyoto, 606-8501, 
    Japan; e-mail: \texttt{nagahara@ieee.org}. 
   },
   Daniel~E.~Quevedo%
   \thanks{
       Daniel Quevedo is  with the School of
    Electrical Engineering \& 
      Computer Science, The University of Newcastle, NSW
      2308, Australia; e-mail:
      \texttt{dquevedo@ieee.org}. 
   },
   Jan~\O stergaard%
  \thanks{Jan \O stergaard is with the Department of Electronic
      Systems, Aalborg 
  University, Denmark; e-mail: \texttt{janoe@ieee.org}
  }
  }
 \date{}
\begin{document}

\maketitle

\begin{abstract}
We study a networked control architecture for linear time-invariant 
plants 
in which an unreliable data-rate limited network is placed
between the controller and the plant input.
The distinguishing aspect of the 
situation at hand is that an unreliable data-rate limited network is placed
between controller and the plant
input.
To achieve robustness with
respect to  dropouts, the controller transmits  data
packets  
containing plant input predictions, which minimize a finite horizon
cost 
function. In our formulation, we design sparse packets for rate-limited
networks, by adopting an an $\ell^0$ optimization, which can be effectively
solved by an 
orthogonal matching pursuit method. Our formulation ensures asymptotic stability of
the control loop  in the presence
of bounded packet dropouts. 
Simulation results indicate that the proposed
controller provides sparse control packets, 
thereby giving bit-rate reductions 
for the case of memoryless scalar coding schemes when 
compared to the use of, more common, quadratic cost  functions,
as in linear quadratic (LQ) control.
\end{abstract}

\section{Introduction}
\label{sec:introduction}
In networked control systems (NCSs) communication between controller(s) and
plant(s) is made through
unreliable and rate-limited communication links
such as wireless networks and the Internet; see e.g.,
\cite{ZhaBraPhi01,HesNagYon07,BemHeeJoh}
Many interesting challenges arise and successful NCS design
methods need to 
consider both control and communication aspects. In particular,
so-called {\em packetized predictive control} (PPC) has been shown to have
favorable stability and performance properties, especially in the presence of packet-dropouts
\cite{Bem98,CasMosPap06,tansil07,munchr08,QueNes11,queost11,pinpar11}.  
In PPC,
the controller output is obtained through solving 
a finite-horizon cost function on-line in a  receding horizon manner.
Each control {\em packet}  contains a sequence
of tentative plant inputs for a finite horizon of
future time instants and is transmitted through a communication channel. 
Packets which are successfully received at
the plant actuator side, are stored in a buffer to be 
used whenever later packets are dropped. 
When there are no
packet-dropouts, PPC reduces to model predictive control. For PPC to give desirable
closed loop properties,  the more unreliable the network is, the larger 
the horizon length (and thus the number of tentative plant input values
contained in each packet) needs to be chosen. Clearly, in principle, this would require increasing
the network bandwidth (i.e., its bit-rate), unless the transmitted signals are suitably compressed.

\par To address the compression issue mentioned above, in the present work we investigate the use of 
{\em sparsity-promoting optimizations} for PPC. Such techniques have been
widely studied in the recent signal processing literature in the context of
{\em compressed sensing} (aka {\em compressive sampling})
\cite{Don06,Can06,CanWak08,Mal,Ela,StaMurFad}.
The aim of compressed sensing is to reconstruct {a signal
from a small set of linear combinations of the signal}
by assuming that the original signal is sparse.
The core idea used in this area is to introduce a sparsity index
in the optimization. To be more specific,
the sparsity index of a vector $\vc{v}$ is defined by the amount of
nonzero elements in $\vc{v}$ and is usually denoted by $\|\vc{v}\|_0$,
called the ``$\ell^0$ norm.''
The compressed sensing problem is then formulated by an
$\ell^0$-norm optimization, which, being  combinatorial is, in  principle hard to solve \cite{Nat95}. Since sparse vectors
contain many 0-valued elements, they can be easily 
compressed by only coding a few nonzero values and their locations.
A well-known example of this kind of sparsity-inducing compression is JPEG
\cite{Wal91}. 

\par The purpose of this work is to adapt sparsity concepts for use in NCSs
over erasure channels. 
A key difference between standard  compressed sensing applications and NCSs is that
the latter operate in closed loop. Thus, time-delays need to be avoided and
stability issues studied, see also\cite{nagque11a}. 
To keep time-delays bounded, we adopt an 
iterative greedy algorithm called
{\em Orthogonal Matching Pursuit} (OMP)
\cite{MalZha93,PatRezKri93} for the on-line design of control packets. The algorithm is very simple and
known to be dramatically faster than exhaustive search. In relation to
stability in the
presence of bounded packet-dropouts,  our results show how to design the  cost
function to ensure  asymptotic stability of the  NCS. 

\par Our present manuscript complements our recent conference
contribution\cite{nagque11a}, which adopted an $\ell^1$-regularized $\ell^2$
optimization for PPC. A limitation of the approach in\cite{nagque11a} is that for
open-loop unstable systems,  asymptotic stability cannot be obtained  in the
presence of bounded  packet-dropouts; the best one can
hope for is   practical stability. Our current paper also complements the
extended abstract \cite{nagque12b}, by considering bit-rate issues and also
presenting a  detailed
technical analysis of the scheme, including proofs of results.  To the best of our knowledge,  the only
other published works which deal with sparsity and compressed sensing for
control are \cite{BhaBas11} which studies compressive
sensing for  state reconstruction in feedback  systems, and
\cite{nagque11a,nagque12a} which focus on sampling and command generation for remote applications.

\par The remainder of this work is organized as follows:
Section~\ref{sec:plant-model-control} revises basic elements of packetized
predictive control. 
In Section \ref{sec:design}, we formulate the design of the sparse control packets
in PPC
based on sparsity-promoting optimization.
In Section \ref{sec:stability}, we study stability of the resultant networked
control system.
Based on this,
in Section \ref{sec:relax}
we propose relaxation methods to compute sparse control packets
which leads to asymptotic (or practical) stability.
A numerical example is included in 
Section \ref{sec:examples}. Section \ref{sec:conclusion} draws conclusions.

\subsubsection*{Notation:}
We write $\N_0$ for $\{0, 1, 2, 3, \ldots\}$, $|\cdot |$ refers to
modulus 
of a
number. The identity
matrix (of appropriate dimensions) is denoted via $I$. 
For a matrix (or a vector) $A$, $A^\T$ denotes the transpose.
For a vector $\vv=[v_1,\ldots,v_n]^\T\in\R^n$ and a positive definite matrix $P>0$, 
we define
\[
 \|\vv\|_P := \sqrt{\vv^\T P \vv},~\|\vv\|_1:=\sum_{i=1}^n |v_i|,~\|\vv\|_\infty:= \max_{i=1,\ldots,n} |v_i|
\]
and also denote $\|\vv\|_2:= \sqrt{\vv^\T \vv}$.
For any  matrix $P$, $\lambda_{\max}(P)$ and $\lambda_{\min}(P)$
 denote the maximum and the minimum eigenvalues of $P$, respectively; $\sigma_{\max}^2(P):=\lambda_{\max}(P^\T P)$.

\section{Packetized Predictive Networked Control}
\label{sec:plant-model-control}
We consider 
discrete-time (LTI) plants with a scalar input:
\begin{equation}
  \label{eq:plant}
  \begin{split}
   \vx(k+1)&=A\vx(k)+\Bs u(k)+\vv(k), \quad k\in\N_0,\\
   \vx(0)&=\vx_0,
  \end{split}
\end{equation}
where $\vx(k)\in\R^n$, $u(k)\in\R$ and $\vv(k)\in\R^n$ is the plant
noise. Throughout this work, we assume that the pair $(A,\Bs)$ is reachable.

We are interested in an NCS architecture where the controller
communicates with the plant actuator through an erasure channel, see Fig.~\ref{fig:NCS}.
\begin{figure}[tbp]
\centering
\includegraphics{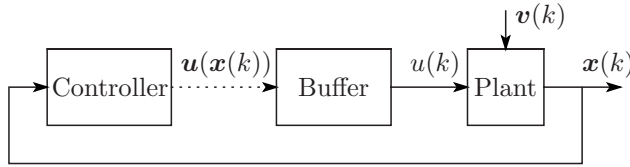}
\caption{NCS with PPC. The dotted line indicates an erasure
  channel.}
\label{fig:NCS}
\end{figure}
This channel introduces 
packet-dropouts, which we model via the
dropout sequence $\{d(k)\}_{k \in \N_0}$ in:
\[
d(k)\eq\left\{
\begin{array} {ll}
1, & \textrm{if packet-dropout occurs at instant $k$,}\\
0, & \textrm{if packet-dropout does not occur at time $k$.}\\
\end{array}\right .
\]

With PPC, as
described, for instance, in \cite{QueNes11},
 at each time instant $k$, the controller uses the state $\vx(k)$ of the plant
(\ref{eq:plant}) to calculate and
send a control packet of the form
\begin{equation}
  \label{eq:packet}
  \vu\bigl(\vx(k)\bigr)\eq
    \bigl[
	u_0\bigl(\vx(k)\bigr), u_1\bigl(\vx(k)\bigr),\dots, u_{N-1}\bigl(\vx(k)\bigr)
    \bigr]^\T\!\!\in\R^N
\end{equation}
to the plant input node. 

To achieve robustness
against packet dropouts, buffering is used.
More precisely, suppose that at time instant $k$, we have $d(k)=0$, i.e.,
the  
data packet $\vu\bigl(\vx(k)\bigr)$  is successfully received at
the plant input side. Then, this packet is stored in a buffer, overwriting its
previous contents.
If the next packet $\vu\bigl(\vx(k+1)\bigr)$ is dropped,
then the plant input $u(k+1)$ is set to $u_1\bigl(\vx(k)\bigr)$, the second element of
$\vu\bigl(\vx(k)\bigr)$.
The elements of $\vu\bigl(\vx(k)\bigr)$ are then successively used until
some packet $\vu\bigl(\vx(k+\ell)\bigr)$, $\ell \geq 2$
is successfully received. 

\section{Design of Sparse Control Packets}
\label{sec:design}
In PPC discussed above, the control packet $\vu\bigl(\vx(k)\bigr)$ is transmitted
at each time $k\in\N_0$ through an erasure channel (see Fig.~\ref{fig:NCS}).
It is often the case that the bandwidth of the channel is limited,
and hence one has to compress control packets to a smaller data size, see also\cite{QueOstNes10}.
To design packets which are easily compressible, we adapt techniques used in the
context of {\em compressed sensing} \cite{Don06,Can06} to design sparse 
control vectors $ \vu\bigl(\vx(k)\bigr)$. Since  sparse vectors contain many
$0$-valued elements, they
can be highly compressed by only coding their  few nonzero components and
locations, as will be illustrated in Section~\ref{sec:examples}.
Thus, the control objective in this paper is to find 
{\em sparse} control packets $\vu\bigl(\vx(k)\bigr)$
which ensure that 
the NCS with bounded packet dropouts is
{\em asymptotically stable}.

\par We define the sparsity of a vector $\vu$  by its $\ell^0$ ``norm,''
\[
 \|\vu\|_0 \eq \text{~the amount of nonzero elements in~} \vu\in\R^N
\]
and introduce the following sparsity-promoting optimization:
\begin{equation}
 \begin{split}
 \vu(\vx)&\eq \argmin_{\vu\in\R^N} \|\vu\|_0\\
 &\text{subject to}~
  \|\vx'_N\|_P^2 + \sum_{i=1}^{N-1}\|\vx'_i\|_Q^2 \leq \vx^\top W \vx,
 \end{split}
 \label{eq:opt0}
\end{equation}
where we omit the dependence on $k$, and
\[
 \begin{split}
   \vx'_0 &= \vx,\quad \vx'_{i+1} = A\vx'_i + \Bs u'_i,\quad i=0,1,\ldots,N-1,\\
   \vu &= \left[u'_0,u'_1,\ldots,u'_{N-1}\right]^\top
 \end{split}
\]
are plant state and input predictions. 
The matrices $P>0$, $Q>0$, and $W>0$ are chosen such that the feedback system is
asymptotically stable.
The procedure of choosing these matrices is presented in Section \ref{sec:stability}.

At each time instant $k\in\N_0$,
the controller uses the current state $\vx(k)$ to solve the above optimization with $\vx=\vx(k)$
thus providing the optimal control packet $\vu\bigl(\vx(k)\bigr)$.
This (possibly sparse) packet can be effectively compressed
before it is transmitted to the buffer
at the plant side. 

\section{Stability Analysis}
\label{sec:stability}

In this section, we show that if
\begin{itemize}
\item $\vv(k)=\vz$,
\item the matrices $P$, $Q$, and
$W$ in the proposed optimization (\ref{eq:opt0}) or (\ref{eq:optM})
are appropriately chosen,
\item and the maximum number of consecutive dropouts is
bounded,
\end{itemize}
then the NCS is asymptotically stable.
The proof is omitted due to limitation of space.

\if0
\subsection{Feasibility Set}
\label{sec:feasibility-set}
Since stability is unavoidably linked to feasibility, we begin our analysis by
introducing the feasible set for the optimization (\ref{eq:optM}):
\begin{defn}[Feasible Set]
For $\vx\in\R^n$, the feasible set $\F_W(\vx)$ of the optimization (\ref{eq:optM}) is defined by
\begin{equation}
 \F_W(\vx) \eq \left\{\vu\in \R^N: \|G\vu-H\vx\|_2^2 \leq \vx^\top W \vx\right\}.
 \label{eq:Fx}
\end{equation}
\hfs
\end{defn}

Clearly, for given matrices $G$ and $H$, the feasible set $\F_W(\vx)$ will be non-empty
if the matrix $W$ is ``large'' enough. To elucidate this issue further, we
first recall  an associated least-squares solution. 

\begin{defn}[Least Squares]
Consider (cf.~(\ref{eq:optM}))
\[
 \vu^\star(\vx)\eq \argmin_{\vu\in\R^N} \|G\vu-H\vx\|_2^2,
\]
and define the matrix $W^\star\geq 0$ be such that
\[
 \|G\vu^\star(x) - H\vx\|_2^2 = \vx^\top W^\star \vx.
\]
We call $\vu^\star(\vx)$ the {\em least squares solution} for $\F_W(\vx)$,
and $W^\star$ the {\em smallest covariance matrix} for $\F_W(\vx)$.\hfs
\end{defn}

The
``smallest'' $W$ which ensures that  $\F_W(\vx)\not = \emptyset$ is, thus, given
by $W^\star$. In fact, we have: 
\begin{lem}
For any $W\geq W^\star$,
we have $\F_W(\vx)\supseteq\F_{W^\star}(\vx)$.
Moreover, if $W\geq W^\star$,
then the feasible set $\F_W(\vx)$
is closed, convex, and non-empty subset in $\R^n$.
\end{lem}
\if0
\begin{proof}
Suppose $W\geq W^\star$. Then $\vu^\star(\vx)\in\F_W(\vx)$ and hence $\F_W(\vx)\supseteq\F_{W^\star}(\vx)$.
This also implies that $\F_W(\vx)$ is non-empty for any $W\geq W^\star$.
Closedness and convexity of $\F_W(\vx)$ are obvious since the set is defined by a quadratic form
(i.e., the set is a closed ellipsoid in $\R^n$).
\end{proof}
\fi
Based on this lemma, we hereafter assume that
\begin{equation}
 \E \eq W-W^\star>0.
 \label{eq:Eps}
\end{equation}
It is well-known that the least squares solution can be obtained in closed form
via the Moore-Penrose generalized inverse \cite{BoyVan}.
\begin{lem}[Moore-Penrose Characterization]
\label{lem:feasible_solutions}
Consider the  generalized inverse of $G$,
$ G^\dagger \eq (G^\top G)^{-1}G^\top$.
Then $\vu^\star(\vx)=G^\dagger H\vx$ and
$W^\star = H^\top (GG^\dagger-I)^\top (GG^\dagger-I)H$.
\end{lem}

\begin{proof}
Since $(A,\Bs)$ is reachable,  $\Bs\neq \vz$
and hence 
$\rank(\Phi) = \rank\bigl(\diag(B)\bigr)=N$.
This and the fact that $P,Q>0$ imply that
$G=\bar{Q}^{1/2}\Phi$ has full column rank (i.e., $\rank(G) = N$).
Therefore, $G^\top G$ is invertible and $G^\dagger$ is well-defined. Direct
manipulations  provide the result.
\end{proof}

The least squares solution $\vu^\star(\vx)$ and the smallest covariance matrix $W^\star$
are also characterized by the notion of optimal control.
The following lemma is a well-known result in dynamic programming
\cite[Chapter 3]{Bert}.
\begin{lem}[Riccati Characterization]
\label{lem:ric}
Suppose that the discrete-time algebraic Riccati equation
\begin{equation}
 P = A^\top P A - A^\top P\Bs(\Bs^\top P\Bs)^{-1}\Bs^\top PA + Q
 \label{eq:ric}
\end{equation}
has a positive definite solution $P>0$ and let
\begin{equation}
 \K \eq -(\Bs^\top P\Bs)^{-1}\Bs^\top PA.
 \label{eq:K}
\end{equation}
Then 
\begin{align}
  \vu^\star(\vx)
   &= \begin{bmatrix}
     \K\\ \K(A+\Bs\K)\\ \vdots\\ \K(A+\Bs\K)^{N-1}
    \end{bmatrix}\vx,\label{eq:ustar}\\
  W^\star &= P-Q.\label{eq:WstarPQ}
\end{align}
Moreover, the matrices $P$ and $K$ satisfy
\begin{gather}
\label{eq:ric2}
(A+\Bs\K)^\top P (A+\Bs\K) -P +Q = 0,\\
\label{eq:orthogonal}
\Bs^\top P(A+\Bs\K)=0.
\end{gather}
\end{lem}

In what follows, we write the feasible set for~(\ref{eq:optM})
by $\F(\vx)$ instead of $\F_W(\vx)$ for simplicity.
$\F(\vx)$ can be characterized as follows:
\begin{lem}[Feasible Solutions]
\label{lem:eps_bound}
Consider $\vu\in\F(\vx)$. Then there exists $\eps(\vx)\in\R^n$ such that, see (\ref{eq:Eps}),
\begin{equation}
\label{eq:Ueps}
 \vu=\vu^\star(\vx) + \eps(\vx),~\text{with}\quad\|G \eps(\vx)\|^2_2 \leq \|\vx\|_\E^2.
\end{equation}
\end{lem}

\begin{proof}
Let $\vu\in\F(\vx)$.
First, by using the relation $\vu^\star = (G^\top G)^{-1}G^\top H\vx$,
or $G^\top G\vu^\star(\vx) = G^\top H\vx$,
we can easily show that two vectors $G\vu-G\vu^\star(\vx)$ and $G\vu^\star(\vx)-H\vx$
are orthogonal.
It follows that
\[
 \begin{split}
  \|G\vu-H\vx\|_2^2 
   &= \|G\vu-G\vu^\star(\vx)\|_2^2+\|G\vu^\star(\vx)-H\vx\|_2^2\\
   &= \bigl\|G\bigl(\vu-\vu^\star(\vx)\bigr)\bigr\|_2^2 + \vx^\top W^\star \vx,
 \end{split}
\]
and hence
\[
 \begin{split}
  \bigl\|G\bigl(\vu-\vu^\star(\vx)\bigr)\bigr\|_2^2 
   &= \|G\vu-H\vx\|_2^2 - \vx^\top W^\star \vx\\
   &\leq \vx^\top W \vx - \vx^\top W^\star \vx= \vx^\top \E\vx.
 \end{split}
\]
This gives the representation (\ref{eq:Ueps}).
\end{proof}
\begin{rem}
The error term $\eps(\vx)$ in~(\ref{eq:Ueps}) may be interpreted as a
``penalty charge'' for sparsifying the vector (control packet) $\vu$,
since the term $\|G\vu(\vx)-H\vx\|_2^2$ with the sparse control $\vu(\vx)$
will be larger than with the least squares one,  $\|G\vu^\star(\vx)-H\vx\|_2^2$.\hfs
\end{rem}

By  (\ref{eq:ustar}) and~(\ref{eq:Ueps}), all feasible control vectors
\begin{equation}
\label{eq:2}
\vu(\vx)=[u'_0(\vx),\ldots,u'_{N-1}(\vx)]^\top \in\F(\vx)
\end{equation}
can be written as
\[
 u'_i(\vx) = \K(A+\Bs\K)^i\vx + \varepsilon_i(\vx),\quad i=0,1,\ldots, N-1,
\]
where $\varepsilon_i(\vx)$ is the $(i+1)$-th element of $\eps(\vx)$
 satisfying the inequality in (\ref{eq:Ueps}).
The associated predictions are
\begin{equation}
 \begin{split}
  \vx'_0 &= \vx,\\
  \vx'_{i+1} &= A\vx'_i + \Bs u'_i(\vx) = (A+\Bs\K)\vx'_i + \Bs w_i(\vx),
 \end{split}
 \label{eq:trajectory}
\end{equation}
where
\begin{equation}
 w_i(\vx) \eq \varepsilon_i(\vx) - \K\sum_{l=0}^{i-1} A^{i-1-l}\Bs\varepsilon_l(\vx).
 \label{eq:wi}
\end{equation}
By using the definition (\ref{eq:K}) of the matrix $K$,
we have
\[
 \begin{split}
  w_i(\vx) 
   &= \frac{\Bs^\top P}{\Bs^\top P\Bs}
    \left[A^iB, A^{i-1}B, \ldots, B, 0, \ldots, 0\right]\eps(\vx)\\
   &= \frac{\Bs^\top P\Phi_i}{\Bs^\top P\Bs}\eps(\vx),\quad i=0,1,\ldots, N-1,
 \end{split}
\]
where $\Phi_i$ is the $(i+1)$-th row block of the matrix $\Phi$, i.e.,
\[
 \Phi_i = \left[A^iB, A^{i-1}B, \ldots, B, 0, \ldots, 0\right].
\]

\subsection{Main Results}
\label{sec:main-results}
Our subsequent analysis, uses the following quadratic function:
\[
 V(\vx) \eq \vx^\top P \vx.
\]

\begin{lem}[Open-loop Bound]
\label{lem:openloop_stability}
Suppose $Q>0$ is chosen arbitrarily, $P>0$ is the solution of the Riccati equation (\ref{eq:ric}),
and $W>0$ is such that $W>W^\star$.
Let $\E=W-W^\star$.
Then there exists a constant $c_1>0$ such that
\begin{equation}
 V(\vx_{i+1}) \leq V(\vx_{i}) - \vx_i^\top Q \vx_i + c_1\cdot\vx^\top \E \vx,~i=0,\ldots,N-1.
 \label{eq:openloop_stability1}
\end{equation}
Also, there exist constants $\rho\in[0,1)$ and $c>0$ such that
\begin{equation}
V(\vx_i) \leq \rho^i V(\vx) + c\cdot\vx^\top \E \vx,\quad i=1,2,\ldots,N.
\label{eq:openloop_stability2}
\end{equation}
\end{lem}
\begin{proof}
By substituting 
the state $\vx_{i+1}$ given in (\ref{eq:trajectory})
into $V(\vx)$, 
and using  (\ref{eq:ric2}) and (\ref{eq:orthogonal}),
we have
\[
 \begin{split}
  V(\vx_{i+1}) 
   &= \vx_{i+1}^\top P \vx_{i+1}\\
   &= \vx_{i}^\top \bigl[\underbrace{(A+\Bs\K)^\top P(A+\Bs\K)}_{=P-Q~\text{by (\ref{eq:ric2})}}\bigr]\vx_{i}\\
   &~
	+ 2\vx_i^\top \underbrace{(A+\Bs\K)^\top P\Bs}_{=0~\text{by (\ref{eq:orthogonal})}}w_i(\vx)
	+ \Bs^\top P\Bs\left|w_i(\vx)\right|^2\\
   &= V(\vx_i) -\vx_i^\top Q \vx_i + \Bs^\top P\Bs\left|w_i(\vx)\right|^2.
 \end{split}
\]
By the definition of $w_i(\vx)$ in (\ref{eq:wi}), we have
\[
 \begin{split}
  &(\Bs^\top P\Bs)\left|w_i(\vx)\right|^2\\
   &\quad =\frac{\eps(\vx)^\top \Phi_i^\top P\Bs\Bs^\top P\Phi_i\eps(\vx)}{\Bs^\top P \Bs}\\
   &\quad \leq \frac{\lambda_{\max}(P^{1/2}\Bs\Bs^\top P^{1/2})}{\Bs^\top P\Bs}\eps(\vx)\Phi_i^\top P\Phi_i\eps(\vx)\\
   &\quad = \eps(\vx)\Phi_i^\top P\Phi_i\eps(\vx)\\
   &\quad \leq \max_{i}\lambda_{\max}\bigl\{\Phi_i^\top P\Phi_i(G^\top G)^{-1}\bigr\}\eps(\vx)^\top G^\top G \eps(\vx)\\
   &\quad \leq c_1\cdot\vx^\top \E \vx,
 \end{split}
\]
where the last inequality is due to Lemma \ref{lem:eps_bound}, and 
\begin{equation}
 c_1\eq \max_{i=0,\ldots,N-1} \lambda_{\max} \bigl\{\Phi_i^\top P\Phi_i(G^\top G)^{-1}\bigr\}>0.
 \label{eq:c1}
\end{equation}
Thus we have the inequality (\ref{eq:openloop_stability1}).
Then this inequality gives
\[
 \begin{split}
  V(\vx_{i+1})
   &\leq \left(1-\frac{\vx_i^\top Q \vx_i}{\vx_i^\top P \vx_i}\right)V(\vx_i) + c_1\cdot\vx^\top \E \vx\\
   &\leq \rho V(\vx_i) + c_1\cdot\vx^\top \E \vx,
 \end{split}
\]
where
\begin{equation}
\rho \eq 1-\lambda_{\min}(QP^{-1}).
\label{eq:rho}
\end{equation}
Since $P\geq Q>0$, we have $\rho\in[0,1)$.
By mathematical induction, we obtain
\[
 \begin{split}
  V(\vx_i)
   &\leq \rho^i V(\vx) + \left(\rho^{i-1}+\cdots+\rho+1\right)c_1\cdot\vx^\top \E \vx\\
   &\leq \rho^i V(\vx) + (1-\rho)^{-1}\bigl(1-\rho^N\bigr)c_1\cdot\vx^\top \E \vx.
 \end{split}
\]
Putting
\begin{equation}
 c \eq (1-\rho)^{-1}\bigl(1-\rho^N\bigr)c_1
 \label{eq:c}
\end{equation}
establishes (\ref{eq:openloop_stability2}).
\end{proof}
\fi

To consider the stability of the networked system
affected by packet dropouts, 
we follow akin to what was done in\cite{QueNes11}
and denote the
time instants where there are no
packet-dropouts, i.e., where
$d(k)=0$, as
\[
  \mathcal{K}=\{k_i\}_{i\in\N_0}\subseteq \N_0, \quad
  k_{i+1}>k_i,\; \forall i \in \N_0
\]
whereas the number of consecutive packet-dropouts is denoted via:
\begin{equation}
  m_i\eq k_{i+1}-k_i-1,\quad i\in\N_0.
  \label{eq:mi}
\end{equation}
Note that $m_i \geq 0$,
with equality if and only if no dropouts occur between instants $k_i$ and
$k_{i+1}$.

When packets are lost, the control system unavoidably operates in open-loop.
Thus, to ensure desirable properties of the networked control system, 
one would like the number of consecutive packet-dropouts to be bounded. 
In particular, to establish asymptotic stability,  
we  make the following assumption:
\footnote{If only stochastic properties are sought, 
then more relaxed assumptions can be used, see related work in \cite{QueOstNes10}.} 

\begin{ass}[Packet-dropouts]
\label{ass:dropouts}
The number of
consecutive packet-dropouts is  uniformly bounded by the prediction horizon
minus one,
that is, $m_i \leq N-1$,  $\forall i \in \N_0$.
We also assume that the first control packet $\vu\bigl(\vx(0)\bigr)$ is successfully transmitted,
that is, $m_0=0$. \hfs
\end{ass}

Theorem~\ref{thm:stability} stated below shows how to design
the matrices $P$, $Q$, and $W$ in~(\ref{eq:opt0})
to ensure asymptotic stability of the networked control system
in the presence of bounded packet dropouts.
Before proceeding, we introduce  the  matrices:
\[
 \begin{split}
  \Phi &\eq \begin{bmatrix}\Bs & 0 & \ldots & 0\\
  			A\Bs & \Bs & \ldots & 0\\
			\vdots & \vdots & \ddots & \vdots\\
			A^{N-1}\Bs & A^{N-2}\Bs & \ldots & \Bs
			\end{bmatrix},\quad
  = \begin{bmatrix}\Phi_0\\\Phi_1\\\vdots\\\Phi_{N-1}\end{bmatrix}\\
  \Phi_i &\eq \begin{bmatrix}A^iB & \dots & B & 0& \dots & 0\end{bmatrix},~i=0,1,\dots,N-1,\\
  \Upsilon &\eq \begin{bmatrix}A\\A^2\\\vdots\\A^N\end{bmatrix},~
  \bar{Q} \eq \mathrm{blockdiag}\{\underbrace{Q,\ldots,Q}_{N-1}, P\},
 \end{split}
\]
which allow us to re-write   (\ref{eq:opt0})
 in vector form via 
\begin{equation}
   \vu(\vx) = \argmin_{\vu\in\R^N} \|\vu\|_0 ~\text{subject to}~ \|G\vu-H\vx\|_2^2 \leq \vx^\top W \vx,
   \label{eq:optM}
\end{equation}
where
$G \eq \bar{Q}^{1/2}\Phi$ and  $H \eq -\bar{Q}^{1/2}\Upsilon$.

\begin{thm}[Asymptotic Stability]
\label{thm:stability}
Suppose that Assumption \ref{ass:dropouts} holds and that  the matrices $P$, $Q$, and $W$
are chosen by the following procedure:
\begin{enumerate}
\item Choose $Q>0$ arbitrarily.
\item Solve the following Riccati equation to obtain $P>0$:
\[
 P = A^\top P A - A^\top P\Bs(\Bs^\top P\Bs)^{-1}\Bs^\top PA + Q.
\]
\item Compute constants $\rho\in[0,1)$ and $c>0$ via
\[
 \begin{split}
   c_1 &\eq \max_{i=0,\ldots,N-1} \lambda_{\max} \bigl\{\Phi_i^\top P\Phi_i(G^\top G)^{-1}\bigr\}>0,\\
   \rho &\eq 1-\lambda_{\min}(QP^{-1}),\quad
   c \eq (1-\rho)^{-1}\bigl(1-\rho^N\bigr)c_1.
 \end{split}
\]
\item Choose $\E$ such that $0<\E<(1-\rho)P/c$.
\item Compute $W^\star=P-Q$ and set $W:=W^\star + \E$.
\end{enumerate}
Then the sparse control packets $\vu\bigl(\vx(k)\bigr)$, $k\in\N_0$,
which is the solution of the optimization (\ref{eq:opt0}) or (\ref{eq:optM})
with the above matrices,
lead to asymptotic stability of the networked control system.
\end{thm}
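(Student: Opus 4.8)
The plan is to use the quadratic function $V(\vx)\eq\vx^\top P\vx$ as a Lyapunov function and to exploit the fact that, since $\vv(k)=\vz$, the open-loop trajectory generated by a received packet coincides exactly with the predictions $\vx'_i$ entering the constraint in~(\ref{eq:optM}). First I would fix the bookkeeping: let $\mathcal{K}=\{k_i\}$ be the successful-reception instants, with $m_i=k_{i+1}-k_i-1$ consecutive dropouts as in~(\ref{eq:mi}). Because $W=W\st+\E>W\st$, the feasible set is non-empty at every $\vx$, so the $\ell^0$ minimizer $\vu\bigl(\vx(k_i)\bigr)$ exists and, crucially, is \emph{feasible}; the argument therefore never uses the specific sparse packet, only its feasibility. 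By the buffering rule and $\vv=\vz$, between $k_i$ and $k_{i+1}$ the plant is driven open-loop by the first $m_i+1$ entries of this packet, so $\vx(k_i+j)=\vx'_j$ for $j=0,\dots,m_i+1$, and Assumption~\ref{ass:dropouts} ($m_i\le N-1$) guarantees $m_i+1\le N$, so the horizon is never exhausted.

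Next I would establish a one-step open-loop decrease of $V$ along these predictions. Writing any feasible $\vu$ as $\vu=\vu\st(\vx)+\eps(\vx)$ with $\|G\eps(\vx)\|_2^2\le\vx^\top\E\vx$ (Feasible Solutions lemma), the prediction obeys $\vx'_{i+1}=(A+\Bs\K)\vx'_i+\Bs w_i(\vx)$, where $w_i(\vx)$ collects the effect of $\eps(\vx)$. Substituting into $V$ and invoking the two Riccati identities $(A+\Bs\K)^\top P(A+\Bs\K)=P-Q$ and $\Bs^\top P(A+\Bs\K)=0$, the cross term vanishes and I obtain
\[
V(\vx'_{i+1})=V(\vx'_i)-(\vx'_i)^\top Q\,\vx'_i+(\Bs^\top P\Bs)\,|w_i(\vx)|^2.
\]
Bounding the last term via the definition of $w_i$ and the $\eps$-bound gives the surplus $c_1\,\vx^\top\E\vx$ with $c_1$ as in the theorem; since $(\vx'_i)^\top Q\,\vx'_i\ge\lambda_{\min}(QP^{-1})V(\vx'_i)$, this yields $V(\vx'_{i+1})\le\rho V(\vx'_i)+c_1\,\vx^\top\E\vx$ with $\rho=1-\lambda_{\min}(QP^{-1})\in[0,1)$. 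Summing the geometric series over the horizon (using $m_i+1\le N$ so that $\rho^{m_i+1}\ge\rho^N$) gives the aggregate bound $V(\vx'_{m_i+1})\le\rho^{\,m_i+1}V\bigl(\vx(k_i)\bigr)+c\,\vx(k_i)^\top\E\vx(k_i)$ with $c=(1-\rho)^{-1}(1-\rho^N)c_1$.

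The decisive step is to turn this recursion on the reception subsequence into a genuine contraction, and this is exactly where the design constraint $0<\E<(1-\rho)P/c$ enters. Since $c\E<(1-\rho)P$ as matrices, there is a constant $\mu\in[0,1)$ with $c\,\vx^\top\E\vx\le\mu(1-\rho)V(\vx)$ for all $\vx$; combined with $\rho^{\,m_i+1}\le\rho$ this gives
\[
V\bigl(\vx(k_{i+1})\bigr)\le\bigl(\rho+\mu(1-\rho)\bigr)V\bigl(\vx(k_i)\bigr)=\gamma\,V\bigl(\vx(k_i)\bigr),
\]
with $\gamma\eq\rho+\mu(1-\rho)\in[0,1)$. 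Iterating from $k_0=0$ (Assumption~\ref{ass:dropouts} gives $m_0=0$) yields $V\bigl(\vx(k_i)\bigr)\le\gamma^i V\bigl(\vx(0)\bigr)\to0$. To recover the full trajectory, for $k_i\le k<k_{i+1}$ the intermediate open-loop bound gives $V\bigl(\vx(k)\bigr)\le\bigl(1+\mu(1-\rho)\bigr)V\bigl(\vx(k_i)\bigr)$, so $V\bigl(\vx(k)\bigr)$ is bounded by a fixed multiple of $V(\vx(0))$ (Lyapunov stability) and tends to zero as $k\to\infty$, since each inter-reception interval has length at most $N$ and hence $i\to\infty$ (attractivity). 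Equivalence of $V$ and $\|\cdot\|_2^2$ through $\lambda_{\min}(P)$ and $\lambda_{\max}(P)$ then gives asymptotic stability.

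I expect the main obstacle to be precisely the conversion in the third paragraph: the open-loop analysis only delivers an \emph{additive} disturbance term $c\,\vx^\top\E\vx$ driven by the current reception state, and one must verify that the budget $\E<(1-\rho)P/c$ simultaneously shrinks the per-interval map below unity despite the worst case $m_i=N-1$ and keeps the intermediate states uniformly bounded. Extracting from the matrix inequality $c\E<(1-\rho)P$ a single contraction factor $\gamma<1$ valid uniformly in $\vx$ is the crux of the argument.
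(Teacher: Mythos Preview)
Your proposal is correct and follows essentially the same approach as the paper: use $V(\vx)=\vx^\top P\vx$, derive the open-loop bound $V(\vx'_j)\le\rho^j V(\vx)+c\,\vx^\top\E\vx$ from the Riccati identities and the Feasible-Solutions lemma, and then exploit $c\E<(1-\rho)P$ to obtain decrease of $V$ at successive reception instants, with intermediate states controlled by the same open-loop bound. Your explicit extraction of a uniform contraction factor $\gamma=\rho+\mu(1-\rho)<1$ is in fact slightly more careful than the paper's version, which from $\rho P+c\E<P$ concludes only that $V\bigl(\vx(k_i)\bigr)$ is strictly decreasing and then asserts $\vx(k_i)\to\vz$.
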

\if0
\begin{proof}
Fix $i\in\N_0$ and note that at time instant $k_i$, the control packet is successfully transmitted to the buffer.
Then until the next packet is received at time $k_{i+1}$, $m_i$ consecutive packet-dropouts occur.
By the PPC strategy,  the control input becomes 
$u(k_i+l) = u_l\bigl(\vx(k_i)\bigr)$, $l=1,2,\ldots,m_i$,
and the states $\vx(k_i+1),\ldots,\vx(k_i+m_i)$
are determined by  these open-loop controls.
Since, by assumption, we have $m_i\leq N-1$,
Lemma \ref{lem:openloop_stability} gives
\begin{equation}
 \begin{split}
  V\bigl(\vx(k)\bigr)
   &\leq \rho V\bigl(\vx(k_i)\bigr) + c\cdot \vx(k_i)^\top \E\vx(k_i)\\
   &=\vx(k_i)^\top \left(\rho P + c\E\right)\vx(k_i)\\
   &< \vx(k_i)^\top \bigl(\rho P +(1-\rho)P \bigr) \vx(k_i)
   =V\bigl(\vx(k_i)\bigr).
 \end{split}
 \label{eq:Vxk}
\end{equation}
for $k=k_i,k_i+1,\ldots,k_i+m_i$.
Also, for $k_{i+1}=k_i+m_i+1$,
the next instant when the control packet is successfully transmitted,
we have
\[
 V\bigl(\vx(k_{i+1})\bigr) < V\bigl(\vx(k_{i+1}-1)\bigl) < V\bigl(\vx(k_i)\bigr).
\]
It follows that at the time instants $k_0,k_1,\ldots$ (no-dropout instants), 
$0\leq V\bigl(\vx(k_i)\bigr)$ strictly decreases,
and hence
$\vx(k_i) \rightarrow \vz$
as $i\rightarrow\infty$.
Then, by (\ref{eq:Vxk}), 
for $k=k_i, k_i+1,\ldots,k_i+m_i$ (consecutive dropout instants),
$V\bigl(\vx(k)\bigr)$ is bounded by $V\bigl(\vx(k_i)\bigr)$. Since the latter
converges to zero,  we conclude that $\vx(k)\rightarrow\vz$ as $k\rightarrow\infty$.
\end{proof}
\fi
\if0
\begin{rem}[Riccati Solution]
In the second step in Theorem \ref{thm:stability},
there may exist no positive definite solution
for the Riccati equation (\ref{eq:ric}).
In this case,
one can instead consider a slightly modified Riccati equation
\[
  P = A^\top P A - A^\top P\Bs(\Bs^\top P\Bs+\delta)^{-1}\Bs^\top PA + Q,
\]
where $\delta>0$ is a sufficiently small positive number.
Then, since $(A,\Bs)$ is assumed reachable and $\delta>0$,
the positive definite solution exists\cite{AndMoo} 
and can be used for  $P$. \hfs
\end{rem}

In summary, the networked control system affected by packet dropouts
is asymptotically  stable
with the sparse control packets
obtained by the optimization (\ref{eq:opt0}) or (\ref{eq:optM})
if $P$, $Q$, and $W$ are computed as per Theorem~\ref{thm:stability}.
\fi
\section{Optimization via OMP}
\label{sec:relax}
In this section, we consider the optimization
\[
  (\Pzero):\qquad \min_{\vu\in\R^N} \|\vu\|_0~~ \text{subject to}~~ \|G\vu-H\vx\|_2^2 \leq \vx^\top W \vx.
\]
The optimization $(\Pzero)$
is in general extremely complex
since it requires a combinatorial search
that explores all possible sparse supports of $\vu\in\R^N$.
In fact, it is proved to be NP hard \cite{Nat95}.
For such problem, 
there have been proposed alternative algorithms
that are much more tractable than exhaustive search; see, e.g., the books 
\cite{Mal,Ela,StaMurFad}.

One approach to the combinatorial optimization is
an iterative greedy algorithm
called {\em Orthogonal Matching Pursuit} (OMP)
\cite{MalZha93,PatRezKri93}.
The algorithm is very simple and dramatically faster than
the exhaustive search.
In fact, assuming that $G\in\R^{m\times n}$ and
the solution $\vu^\ast$ of $(\Pzero)$ satisfies $\|\vu^\ast\|_0=k_0$,
then the OMP algorithm requires $O(k_0mn)$ operations,
while  exhaustive search requires $O(mn^{k_0}k_0^2)$
\cite{BruDonEla09}.\footnote{For control applications, OMP has recently been
  proposed for use 
in formation control in\cite{MasAnkVer11}.}
The OMP algorithm  for our control problem is shown in
Algorithm \ref{alg:OMP}.
In this algorithm,  
$\supp\{\vx\}$ is the support set of a vector 
$\vx=[x_1,x_2,\dots,x_n]^\top$,
that is, $\supp\{\vx\}=\{i:x_i\neq 0\}$,
and $\vg_j$ denotes the $j$-th column of the matrix $G$.
\begin{algorithm}[tb]
\caption{OMP for sparse control vector $\vu(\vx)$}
\label{alg:OMP}
\begin{algorithmic}
\REQUIRE $\vx\in\R^n$ \COMMENT{observed state vector}
\ENSURE $\vu(\vx)$ \COMMENT{sparse control packet}
\STATE $k:=0$.
\STATE $\vu[0]:=\vz$.
\STATE $\vr[0]:=H\vx-G\vu[0]=H\vx$.
\STATE $\Sr[0]:=\supp\{\vx[0]\}=\emptyset$.
\WHILE{$\|\vr[k]\|_2^2>\vx^\top W\vx$}
  \FOR{$j=1$ {\bf to} $N$}
    \STATE $\displaystyle z_j:=\frac{\vg_j^\top\vr[k]}{\|\vg_j\|_2^2}=\argmin_{z\in\R} \| \vg_jz - \vr[k] \|_2^2$.
    \STATE $e_j:=\|\vg_jz_j-\vr[k]\|_2^2$.
  \ENDFOR
  \STATE Find a minimizer $j_0\not\in\Sr[k]$ such that $e_{j_0}\leq e_{j}$, 
  	for all $j\not\in \Sr[k]$.
  \STATE $\Sr[k+1] := \Sr[k] \cup \{j_0\}$
  \STATE $\displaystyle \vu[k+1]:=\argmin_{\supp\{\vu\}=\Sr[k+1]}\|G\vu-H\vx\|_2^2$.
  \STATE $\vr[k+1]:=H\vx-G\vu[k+1]$.
  \STATE $k:=k+1$.
\ENDWHILE
\STATE {\bf return} $\vu(\vx)=\vu[k]$.
\end{algorithmic}
\end{algorithm}

Next, we study
stability of the NCS  with control packets
computed by  Algorithm \ref{alg:OMP}.

Since Algorithm \ref{alg:OMP}
always returns a feasible solution for $(\Pzero)$,
we have the following result based on Theorem \ref{thm:stability}.
\begin{thm}
\label{thm:OMP}
Suppose that Assumption \ref{ass:dropouts} holds and that the matrices $P$, $Q$,
and $W$ are chosen according to the procedure given
in Theorem \ref{thm:stability}.
Then, the control packets $\vu_{\OMP}(\vx(k))$, $k\in\N_0$ obtained by the OMP Algorithm \ref{alg:OMP}
provide an asymptotically stable NCS.
\end{thm}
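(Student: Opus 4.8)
The plan is to reduce the claim to Theorem~\ref{thm:stability} by observing that its stability conclusion hinges only on each transmitted packet being \emph{feasible} for $(\Pzero)$ --- i.e.\ satisfying $\|G\vu - H\vx\|_2^2 \leq \vx^\top W \vx$ --- and never on the packet being the exact $\ell^0$ minimizer. Thus it suffices to show that, for every state $\vx$, Algorithm~\ref{alg:OMP} halts after finitely many iterations and returns a point $\vu_{\OMP}(\vx)$ lying in the feasible set $\F(\vx)$.

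First I would establish termination and feasibility. Inspecting Algorithm~\ref{alg:OMP}, the \textbf{while} loop runs precisely while $\|\vr[k]\|_2^2 > \vx^\top W \vx$, where $\vr[k] = H\vx - G\vu[k]$; hence, at the moment the loop exits, the returned vector satisfies $\|G\vu_{\OMP}(\vx) - H\vx\|_2^2 = \|\vr[k]\|_2^2 \leq \vx^\top W \vx$, which is exactly the constraint of $(\Pzero)$. It remains to show the loop cannot run forever. Each iteration enlarges the support set $\Sr[k]$ by one new index $j_0 \notin \Sr[k]$, so after at most $N$ iterations $\Sr[k] = \{1,\dots,N\}$ and the least-squares step
\[
  \vu[k] = \argmin_{\supp\{\vu\}=\Sr[k]} \|G\vu - H\vx\|_2^2
\]
becomes the unconstrained least-squares problem over $\R^N$. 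Because $(A,\Bs)$ is reachable we have $\Bs \neq \vz$, so $\Phi$ has full column rank $N$; together with $P,Q>0$ this makes $G = \bar{Q}^{1/2}\Phi$ of full column rank, and the least-squares solution is the unique $\vu^\star(\vx) = G^\dagger H\vx$ with residual $\vx^\top W^\star \vx$. Since $W = W^\star + \E$ with $\E>0$, we have $\vx^\top W^\star \vx < \vx^\top W \vx$ for $\vx\neq\vz$ (and for $\vx=\vz$ the loop never starts). Hence the loop condition must fail no later than the $N$-th iteration, proving termination with a feasible output $\vu_{\OMP}(\vx)\in\F(\vx)$.

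Finally I would invoke Theorem~\ref{thm:stability}. The feasibility just established means $\vu_{\OMP}(\vx(k)) \in \F(\vx(k))$ for every $k$, so every packet admits the representation $\vu_{\OMP}(\vx) = \vu^\star(\vx) + \eps(\vx)$ with $\|G\eps(\vx)\|_2^2 \leq \|\vx\|_\E^2$, and consequently the open-loop decay estimate underlying Theorem~\ref{thm:stability} applies verbatim to the predicted trajectories generated by $\vu_{\OMP}$. Replaying the Lyapunov argument of Theorem~\ref{thm:stability} --- which under Assumption~\ref{ass:dropouts} uses these estimates over the at-most-$(N-1)$ open-loop steps between successive successful receptions --- then yields strict decrease of $V(\vx(k_i)) = \vx(k_i)^\top P \vx(k_i)$ along the no-dropout instants and boundedness in between, hence $\vx(k)\to\vz$. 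The main subtlety to verify is precisely the decoupling claim in the first paragraph: that nowhere in the proof of Theorem~\ref{thm:stability} is optimality of the packet used, only its membership in $\F(\vx)$. Once this is checked, no recomputation is needed and the conclusion transfers directly to the OMP packets.
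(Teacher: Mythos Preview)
Your proposal is correct and follows the paper's own approach: the paper simply asserts that ``Algorithm~\ref{alg:OMP} always returns a feasible solution for $(\Pzero)$'' and then invokes Theorem~\ref{thm:stability}. You supply the missing details behind that assertion --- in particular the termination argument showing the loop must exit by iteration~$N$ because the full-support least-squares residual equals $\vx^\top W^\star \vx \leq \vx^\top W\vx$ --- and you correctly identify that the proof of Theorem~\ref{thm:stability} uses only feasibility (membership in $\F(\vx)$), never $\ell^0$-optimality.
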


Consequently, when compared to the method
used in \cite{nagque11a}, Algorithm~\ref{alg:OMP} has the following main advantages:
\begin{itemize}
\item it is simple and fast,
\item it returns  control packets that asymptotically stabilize the networked control system
\end{itemize}

We note that in conventional transform based compression
methods e.g., JPEG, the encoder maps the source signal 
into a domain where the
majority of the transform coefficients are approximately zero and only  few
coefficients carry significant information. One therefore only needs to encode
the few significant transform coefficients as well as their locations. 
In our case, on the other hand, we use the OMP algorithm to sparsify the control
signal in its original domain, which simplifies the decoder operations at the
plant side. To obtain a practical scheme for closed loop control, we employ
memoryless entropy-constrained scalar quantization of the 
non-zero coefficients of the sparse control signal and, in addition, send
information about the coefficient locations. We then show, through computer
simulations, that a significant bit-rate reduction is possible compared to when
performing memoryless entropy-constrained scalar quantization of the control signal
obtained by solving the standard quadratic control problem for PPC
as in \cite{Bem98}.%
\footnote{It is
  interesting to note that our proposed sparsifying controller could 
  also be useful in applications where there is a 
  \emph{setup cost} of the type found, for example, in inventory control; see,
  e.g.,\cite{berthi06}.  In such a case, it would
  be advantageous to have many zero control values.} 

\section{Simulation Studies}
\label{sec:examples}
To assess the effectiveness of
the proposed method, we consider the following continuous-time plant model:
\begin{equation}
 \begin{split}
  \dot{\vx}_c &= A_c \vx_c + B_c u,\\
  A_c&=\left[\begin{array}{rrrr}
   -1.2822&0&0.98&0\\ 0&0&1&0\\ -5.4293&0&-1.8366&0\\-128.2&128.2&0&0
  \end{array}\right],\\
  B_c&=\left[\begin{array}{r}
   -0.3\\
    0\\
   -17\\
    0\\
  \end{array}\right].
 \end{split}
 \label{eq:example-model}
\end{equation}
This model is a constant-speed approximation of some of the linealized
dynamics of a Cessna Citation 500 aircraft, when it is cruising at an altitude of 5000 (m)
and a speed of 128.2 (m/sec) \cite[Section 2.6]{Mac}.
To obtain a discrete-time model, we discretize \eqref{eq:example-model}
by the zero-order hold with sampling time $T_s = 0.5$ (sec)%
\footnote{
 This is done by MATLAB command {\tt c2d}.
}.
We set the  horizon length (or the packet size)  to $N=10$.
We choose the weighting matrix $Q$ in (\ref{eq:opt0}) as $Q=I$,
and choose the matrix $W$ according to the procedure shown in Theorem \ref{thm:stability}
with
$\E=\frac{2}{3}(1-\rho)P/c<(1-\rho)P/c$.

\subsection{Sparsity and Asymptotic Stability}
\label{sec:spars-asympt-stab}
We first simulate the NCS
in the noise-free case where $\vv(k)=\vz$. We consider the proposed method
using the OMP algorithm 
and also the $\elll$ optimization of\cite{nagque11a}:
\[
 (\Qone):\qquad \min_{\vu\in\R^N} \nu_1\|\vu\|_1 + \frac{1}{2}\|G\vu-H\vx\|_2^2,
\]
where $\nu_1$ is a positive constant.
To compare these two sparsity-promoting methods with traditional PPC approaches,
we also consider a finite-horizon quadratic cost function
\[
 (\Qtwo):\qquad \min_{\vu\in\R^N} \frac{\nu_2}{2}\|\vu\|_2^2 + \frac{1}{2}\|G\vu-H\vx\|_2^2,
\]
where $\nu_2$ is a positive constant, yielding
the $\ell^2$-optimal control
\[
 \vu_2(\vx) = (\nu_2I+G^\top G)^{-1}G^\top H\vx.
 \] 
To choose the regularization parameters $\nu_1$ in $(\Qone)$ and $\nu_2$ in $(\Qtwo)$,
we empirically compute the relation between each parameter and the control
performance, as measured by 
the $\ell^2$ norm of the state $\{\vx(k)\}_{k=0}^{99}$.
Fig.~\ref{fig:norm_vs_nu} shows this relation.
\begin{figure}[tbp]
\centering
\includegraphics[width=\linewidth]{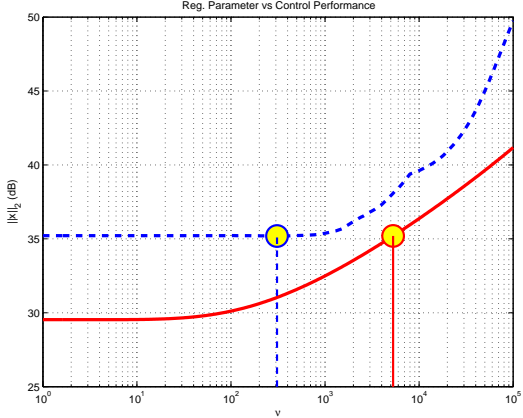}
\caption{Regularization parameters $\nu_i$ versus control performance $\|\vx\|_2$
for $(\Qone)$ (solid) and $(\Qtwo)$ (dash). The circles show the chosen parameters
$\nu_1$ and $\nu_2$.}
\label{fig:norm_vs_nu}
\end{figure}
By this figure, we first find the optimal parameter for $\nu_2>0$
that optimizes the control performance, i.e.,
$\nu_2=3.1\times 10^2$.
Then, we seek $\nu_1$
that gives the same control performance, namely,
$\nu_1=5.3\times 10^3$.
Furthermore,  we also investigate the ideal least-squares solution $\vu^\star(\vx)$
that minimizes $\|G\vu-H\vx\|_2$.

\par With  these parameters, we run 500 simulations
with randomly generated (Markovian) packet-dropouts that
satisfy Assumption \ref{ass:dropouts},
and with initial vector $\vx_0$
in which each element is independently sampled
from the normal distribution with mean 0
and variance 1. 
Fig.~\ref{fig:sparsity} shows the averaged sparsity of the obtained
control vectors.
\begin{figure}[tbp]
\centering
\includegraphics[width=\linewidth]{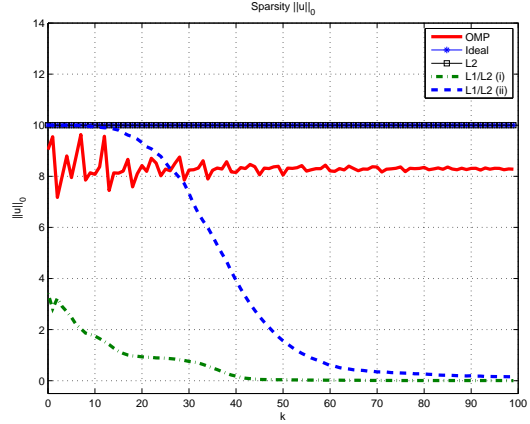}
\caption{
Sparsity of the control vectors by
ideal (star),
$\ell^2$ with regularization parameter $\nu_2=3.1\times 10^2$,
OMP (solid),
$\elll$ with
$\nu_1=5.3\times 10^3$
 (dash-dot)
 and
$\nu_1=5.3$ (dash).
}
\label{fig:sparsity}
\end{figure}
The $\elll$ optimization with 
 $\nu_1=5.3\times 10^3$
always produces
much sparser control vectors than those by OMP.
This property depends on how to choose the regularization parameter
$\nu_1>0$.
In fact, if we choose 
smaller $\nu_1=5.3$, the sparsity
changes as shown in Fig.~\ref{fig:sparsity}.
On the other hand, if we use a sufficiently large $\nu_1>0$,
then the control vector becomes $\vz$.
This is indeed the sparsest control, but leads
 to very poor control performance:
the state diverges until the control vector becomes
nonzero (see \cite{nagque11a}).

Fig.~\ref{fig:control_performance}
shows the averaged 2-norm of the state $\vx(k)$ as a function of $k$ for all 5 designs.
We see that, with exception of the $\elll$ optimization based PPC,  the NCSs are
nearly exponentially stable. In contrast, if the  $\elll$
optimization of\cite{nagque11a} is used, then only practical stability is observed.
The simulation  results are consistent with
Corollary \ref{thm:OMP} and our previous results in\cite{nagque11a}.
Note that the $\elll$ optimization with $\nu_1=5.3$
shows better performance than that with
$\nu_1=5.3\times 10^3$,
while $\nu_1=5.3\times 10^3$ gives a much sparser vector.
This shows a tradeoff between the performance and the sparsity.
\begin{figure}[tbp]
\centering
\includegraphics[width=\linewidth]{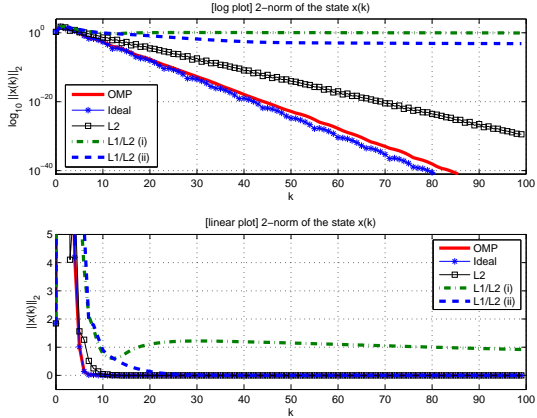}
\caption{2-norm of the state $\vx(k)$ for the four PPC designs:
log plot (above) and linear plot (below).}
\label{fig:control_performance}
\end{figure}

Fig.~\ref{fig:cpu_time} shows the associated computation times.
\begin{figure}[tbp]
\centering
\includegraphics[width=\linewidth]{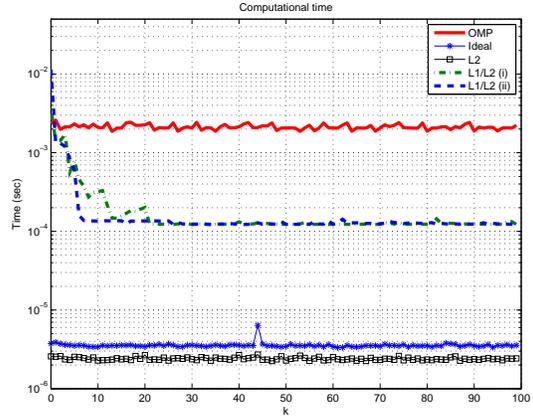}
\caption{Computational time.}
\label{fig:cpu_time}
\end{figure}
The $\elll$ optimization is faster than OMP in many cases.
Note that the ideal and the $\ell^2$ optimizations
are much faster, since
they require just one matrix-vector multiplication.


\subsection{Bit-rate Issues}
\label{sec:bitrate-issues}
We next investigate bit-rate aspects for a Gaussian plant noise process $\vv(k)$.
To keep the encoder and decoder simple, 
we will be using memoryless entropy-constrained uniform scalar quantization;
see \cite{CovTho}. 
Thus, the non-zero elements of the control vector are independently encoded 
using a scalar uniform quantizer followed by a scalar entropy coder. 
In the simulations, we choose the step size of the quantizer to be $\Delta = 0.001$, 
which results in negligible quantization distortion. 
We first run 1000 simulations with 100 time steps 
and use the obtained control vectors for designing entropy coders. 
A separate entropy coder is designed for each element in the control vector. 
For the first $N/2$ elements in the vector, 
we always use a quantizer followed by entropy coding. 
For the remaining $N/2$ elements, we only quantize and entropy code the non-zero elements. 
We then send additional $N/2$ bits indicating, which of the $N/2$ elements have
been encoded. 
The total bit-rate for each control vector is obtained as the sum of the codeword lengths 
for each individual non-zero codeword  $+N/2$ bits. 
For comparison, we use the same scalar quantizer with step size $\Delta=0.001$ 
and design entropy coders on the data obtained from the $\ell^2$ optimization. 
Since the control vectors in this case are non-sparse, we separately encode 
\emph{all} $N$ elements and sum the lengths of the individual codewords to obtain the total bit-rate. 
In both of the above cases, the entropy coders are Huffman coders. 
Moreover, the system parameters are initialized with different random seeds 
for the training and test situations, respectively.
The average rate per control vector for the OMP case is $55.36$ bits, 
whereas the average rate for the $\ell^2$ case is $112.57$ bits. Thus, due to sparsity, 
a $50.8$ percent bit-rate reduction is on average achieved.

\begin{figure}[tbp]
\centering
\includegraphics[width=\linewidth]{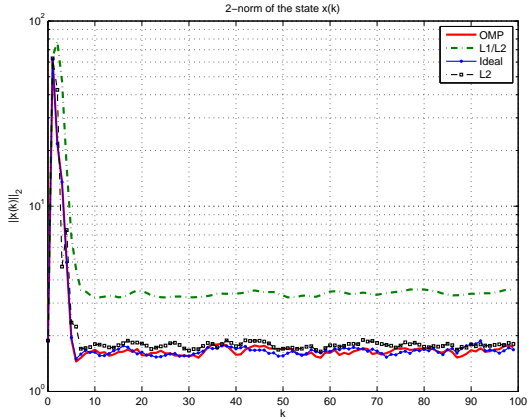}
\caption{2-norm of the state $\vx(k)$ with Gaussian noise.}
\label{fig:performance-2}
\end{figure}
\begin{figure}[tbp]
\centering
\includegraphics[width=\linewidth]{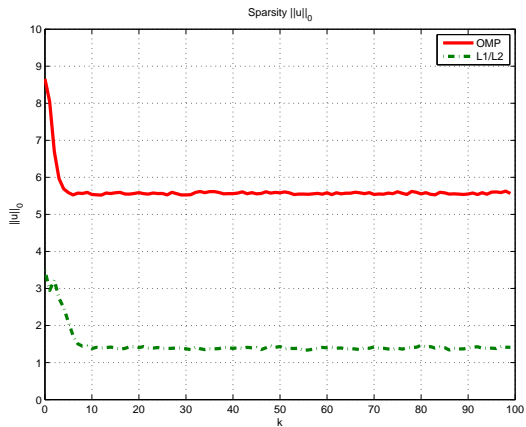}
\caption{Sparsity of the control vectors with Gaussian noise.}
\label{fig:sparsity-2}
\end{figure}

Fig.~\ref{fig:performance-2} shows the 2-norm of the state $\vx(k)$ and
Fig.~\ref{fig:sparsity-2} shows the sparsity.
We can also see the tradeoff between the performance and the sparsity
in this case.

\section{Conclusion}
\label{sec:conclusion}
We have studied a packetized
predictive control formulation  with a sparsity-promoting cost
function for error-prone rate-limited networked control system.
We have given sufficient conditions for asymptotic stability when the
controller is used over a network with bounded packet dropouts.
Simulation results indicate that the proposed controller provides
sparse control packets, thereby giving bit-rate reductions
when compared to the use of, more common, quadratic cost  functions.
Future work may include further study of performance aspects and the effect of
plant disturbances.  


\end{document}